\documentclass[journal]{IEEEtran}
\ifCLASSINFOpdf
\else
\fi

\usepackage{color}
\usepackage{xcolor, soul}
\sethlcolor{yellow}
\usepackage{float}
\usepackage{makeidx}  
\usepackage{graphicx} 
\usepackage{mathrsfs}
\usepackage{arydshln}
\usepackage[tbtags]{amsmath}         
\usepackage{amssymb}                 
\usepackage{epsfig}                  
\usepackage{latexsym}                
\usepackage{array}                  
\usepackage{ifthen}
\usepackage{epstopdf}
\usepackage{booktabs}
\usepackage{subcaption}
\usepackage{theorem}
\usepackage{color}
\usepackage{comment}
\usepackage{CJK}
\usepackage{epstopdf}
\usepackage{cases,multicol}
\usepackage{textcomp}
\usepackage{tikz}

\newtheorem{theorem}{Theorem}
\newtheorem{proof}{Proof}
\newtheorem{lemma}{Lemma}

\newtheorem{defn}{Definition}

\newtheorem{fact}{Fact}
\newtheorem{remark}{Remark}

\newcommand{\be}{\begin{equation}}
\newcommand{\ee}{\end{equation}}

\newcommand{\R}{\mbox{$\mathbb{R}$}}

\begin{document}

\title{A Fresh Look on Network Synchronization} 
%
%

\author{Jilie~Zhang 
\thanks{Jilie Zhang was the School of Information Science and
Technology, Southwest Jiaotong University, Chengdu, Sichuan, 610031, P.R. China. (E-mail: jilie0226@163.com)}
%
}

%
%

\markboth{Journal of \LaTeX\ Class Files,~Vol.~14, No.~8, August~2025}%
{Shell \MakeLowercase{\textit{et al.}}: Bare Demo of IEEEtran.cls for IEEE Journals}
%



\maketitle

\begin{abstract}
This paper gives a fresh look at network synchronization. 
Here we no longer analyze it from the view of mathematics, such as graph theory, while we probe into one from control theory.
First, we analyze the synchronization region using the inner coupling matrix, giving up the routine method of studying the network structure.
The motivation comes from the inner coupling matrix that is not subject to any restrictions like network structure, such as distance and communication strength among nodes.
{\color{red}It can be configured at will to meet the synchronization performance if only the states of the local dynamic are measurable or observable and the communication topology is connected.
Thus, it is very useful for future practical engineering design.} In addition, we have an amazing finding that the network synchronization and multi-agent system consensus problems are equivalence essentially. Afterwards a unified viewpoint, that is, the essence of multi-agent consensus control is the same as that of network synchronization, is present.
Here, the equivalence relation is clearly proven and proposed. Therefore, {\color{red}we can synthesize the inner coupling matrix for network systems or the controller gain for multi-agent systems for each other.} 
Finally, {\color{red}we also present a kind of method for addressing the nonlinear complex network system.} Then the effectiveness of method is verified by taking the network of the three-
oscillator universal probe as an example.

\end{abstract}
\begin{IEEEkeywords}
Multiagent consensus, Network synchronization, Network syn-
chronizatio, Inner coupling matrix.
\end{IEEEkeywords}

%
\IEEEpeerreviewmaketitle

\section{Introduction}
The synchronization issue is an important branch for complex network research,
which is a topic that has been discussed over and over again.
Just as many literature have mentioned, there exist many phenomena of network synchronization in nature,
such as spatio-temporal chaos \cite{ZHELEZNYAK1994}, after the wonderful performance,
the audience spontaneously burst into applause \cite{Ravasz2000} and the fireflies rhythmically flash at the same frequency \cite{Mirollo1990}, ect.
However, what intrigues us is what ``mysterious force'' causes them to exhibit a strong tendency toward synchronization?
In the words of Kenneth S. Fink in \cite{Fink2000}, ``A central dynamical
question is, when is such synchronous behavior stable,
especially in regard to coupling strengths and connectivity in
the network?''

The initial understanding of collective behavior attributes to the structural properties of a network in \cite{Strogatz2001}.
While \cite{Barahona2002} is to link explicitly the Small-Word addition of random shortcuts to the synchronization of networks of coupled dynamical systems.
Afterwards, it makes the linear stability of the synchronous state be linked to an algebraic condition of the Laplacian matrix of the network by the connectivity concept.
Since the eigenvalues depend on the properties of the network structure, 
 The influences of the connection matrix eigenvalues on system synchronization are discussed in \cite{Pecora1998} by the master stability function (MSF). For example, 
the coupled system is described as one consisting of $N$ node (isolated dynamics), where the autonomous dynamics of the $i_{th}$ node is
\begin{align}\label{ee1}
\dot{{x}}_i = {F}({x}_i) + \sigma \sum_j{a}_{ij}\otimes {H}({x_j}),\quad i=1,\cdots,N 
\end{align}
where $F(\cdot)$ is the system function and $H(\cdot)$ is the inner coupling function. $l_{ij}$ are the coupling coefficients of the network structure. $\sigma$ is the coupling strength. The collective system is written as follows
\begin{align}\label{ee2}
\dot{\mathbf{x}} = \mathbf{F}(\mathbf{x}) + \sigma \mathcal{L}\otimes \mathbf{H}(\mathbf{x}),
\end{align}
where $\mathbf{x} = (x_1, x_2, \dots, x_N)$ denotes the behavior of the system. $\mathbf{F}(\mathbf{x}) = [F(x_1), F(x_2), \dots, F(x_N)]$; $\mathbf{H}(\mathbf{x}) = [H(x_1), H(x_2), \dots, H(x_N)]$. $\mathcal{L}$ is the Laplace matrix with elements $\{l_{ij}\}$. Assuming that the network is connected, then there is exactly one zero eigenvalue for $\mathcal{L}$.  $\otimes$ is the direct product.

In recent years, most of the works try to explain a central problem: “When is such synchronous behavior stable, especially
with regard to coupling strengths in the network?"\cite{Pecora1998}.
At present, it is agreed that the coupled (network) structure plays an important role for the network synchronization. 
First, by the variational of (\ref{ee2}), and then diagonalizing $\mathcal{L}$, the system (\ref{ee2}) is recast as follows:
\begin{align}\label{ee3}
\dot{\xi}_k = [D\mathbf{F} + \sigma \lambda_k D\mathbf{H}]{\xi}_k,\quad k=1,2,\cdots,N. 
\end{align}
where ${\xi}_k$ are the transverse directions of the synchronization manifold. $\lambda_k$ are the eigenvalues of $\mathcal{L}$. Based on it, the concept of master stability functions (MSF) is presented in \cite{Pecora1998}. It first links it to the numerical synchronization region of coupled dynamical systems. The result is then verified in \cite{Fink2000}, taking three coupled oscillators array as an example.

Afterward, three main criteria for the achieving dynamic synchronization of the network (\ref{ee1}) are present \cite{WANG2002,Wang2002b,Barahona2002,Stefa2007,Chen2008}, in  light of MSF.  
The first criterion depends on the smallest nonzero eigenvalue $\lambda_2$ of $\mathcal{L}$ \cite{WANG2002,Wang2002b}, that is,
\[
\lambda_2 \in [\alpha_0, \infty],\quad \alpha_0>0,
\]
where $\alpha_0$ is determined by the network parameters, involving $F(\cdot)$, $H(\cdot)$ and $\sigma$. 
Afterwards, another criterion has been proposed. It is shown in terms of the ratio of the smallest nonzero eigenvalue and the largest eigenvalue of $\mathcal{L}$ \cite{Barahona2002}, as
follows:
\[
\lambda_2/\lambda_N \in [\alpha_1, \alpha_2],\quad \alpha_1,\alpha_2>0
\]
where $\alpha_1$ and $\alpha_2$ are determined likewise by the network 
parameters. Finally, the fact that the synchronization region of the network is a union of several intervals was found in \cite{Stefa2007} and \cite{Chen2008}, since the ratio may well be located within these several intervals. These are spelled out in detail in \cite{Chen2022}.

Whereafter, \cite{Li2010} introduces MSF into the mutiagent field and analyzes the consensus region using the MSF method.
Later, \cite{Zhang2011} develops the result with LQR controller, that is, the coupling gain needs to satisfy the following consensus region,
$$
\sigma \geq \frac{1}{2 \min_{i \in \mathcal{N}} \operatorname{Re}(\lambda_i)},
$$
where \(\lambda_i\) denotes the eigenvalues of \((\mathcal{L} + G)\).
For network synchronization, in addition to the study of structural properties,
in recent years, the controllability of complex networks is also present by the literature \cite{Liu2011,Yuan2011}.
They develop analytical tools to study the
controllability of an arbitrary complex directed network, identifying the set of driver nodes with time-dependent
control that can steer the entire system to achieve synchronization.
Based on these works, there are many emerging works on pinning control, such as \cite{Sorrentino2007,Zhou2008,Yu2009,Lu2010}.
In particular, \cite{Lu2010} points out that the nodes that can access all other nodes in the digraph are referred to the pinned nodes.
However, they actually still change the structural properties of the network using the external pinning controller, that is, adding the edges between the pinning nodes and the reference node.

Therefore, analyzing network structure has become the mainstream approach for revealing the phenomenon of synchronization, since it is intuitive and easy to analyze using the existing mathematical tools, such as graph theory. 
However, there is an unavoidable issue, that is, how to address it if the network structure is unalterable.
For instance, it is unable to communicate between the pinning nodes and the reference node
due to the limitations of distance and signal strength. Thus, the pinning control is not available. 
In addition, for spontaneous synchronization of the network,
some communication properties are also unchangeable.
For instance, due to physical isolation or because the signal strength is bounded among nodes,
they are unable to interact with each other. Then, all analytical methods based on the structural properties of the network fail.
Namely, how to respond to Kenneth S. Fink's problem without using the point view of the network structure.
Here, we try to explore this problem from a fresh look.

{\color{red}In practice, at present most of the works pay overly attention to the impact of network structure on synchronization,
while neglecting another important factor, that is, the inner coupling matrix.
In fact, it is obvious to see that the inner coupling matrix plays an important role in the synchronization of the network, from MSF in \cite{Pecora1998}, as follows
$$
\left( \mathrm{DF} + \sigma \lambda_k \mathrm{DH} \right)<0.
$$
Therefore, when the network structure is fixed, the proper inner coupling matrix can also enable the system to synchronize.
The greatest advantage lies in that it is not restricted by the network structure and can be configured at will to meet the synchronization performance if only the states of the local dynamic are measurable and observable. Thus, it is also easy to implement for future practical engineering design.}
Here, the implementation method is first introduced systematically taking the example of linear system network synchronization in Section \ref{s1}.
In addition, the method can improve process (such as synchronization time and overshot) of synchronization effectively, even it can achieve the ability to place the poles. This paper unveils the ``mysterious force'' by analyzing the impact of the inner coupling matrix on network synchronization.
Next, we also present a unified viewpoint for network synchronization and multiagent consensus, in Section \ref{s2}. 
We clearly setup the relation between them and show the equivalence of both questions: configuration problem for the inner coupling matrix and the design problem on control gain.
Finally, we also present a kind of method to address the nonlinear complex network system in Section \ref{s3}. The network of the three-oscillator universal probe is used to verify the effectiveness of our method.






\section{Preliminaries}
\subsection{Description of network topology}

The topology among network nodes can be encoded by graph theory.
Define $\mathcal{T} = (\mathcal{V}, \mathcal{E}, \mathcal{A})$ to describe the network/multi-agent system with ${N}$ nodes, where $\mathcal{V}=\{{v}_1,\ldots,{v}_{{N}}\}$ is the node set.  $\mathcal{E}$ is the edge set with $\mathcal{E} \subseteq \mathcal{V} \times \mathcal{V}$, and $\mathcal{A}=[a_{ij}]$ is the coupling matrix. An edge of the topology $\mathcal{T}$ is denoted by $\nu_{ij}=(v_j,v_i)$, which graphically represented by the link with the arrow from node $j$ to node $i$. It means the information can flow from node $v_j$ to node $v_i$.
We assume that all edges are positive, then $a_{ij} > 0$ if and only if $\nu_{ij} \in \mathcal{E}$; otherwise $a_{ij} = 0$. For notational simplicity, denote $i\in \mathcal{I}=\{1,2,\ldots,{N}\}$.
According to the above description, the following definition is given.
\begin{defn}[Laplacian Matrix]\label{D1} The Laplacian matrix of a topology is defined as $ \mathcal{L}:=[l_{ij}]=\mathcal{D}-\mathcal{A}$, where $\mathcal{D} = \mbox{diag} \{l_{ii}\} \in R^{{N} \times {N}}$ is the in-degree matrix of topology $\mathcal{T}$, with $l_{ii} = \sum\limits_{{j=1, i \neq j}}^{{N}} a_{ij}$ being the in-degree of node $v_i$.
\end{defn}
\begin{remark}
It needs to be emphasized that the Laplacian matrix $\mathcal{L}$ is used to describe the relation among adjacent nodes in the multiagent field.
While in the complex network field, it is defined by the connection matrix $\mathbf{G}$ \cite{Pecora1998,Fink2000}.
There is the relation $\mathcal{L}=-\mathbf{G}$ between them. Therefore we use the symbol $\mathcal{L}$ uniformly here. \hfill$\blacksquare$
\end{remark}
Note that $\lambda_{max}(\cdot)$ indicates the maximum eigenvalue of specified matrix. $Re(\cdot)$ indicates that the real part of the complex number.

\begin{lemma}\label{l1}
Assume that a complex number $\varrho$ is in the right half-plane and
all $Ger\check{s}gorin$ discs of a complex matrix $\mathcal{Z}=[z_{ij}]$ are in the left half-plane, and then 
all eigenvalues of the matrix $\varrho \mathcal{Z}$ is still located in the open left half-plane,
if only
\begin{align}
r_{ii}\sin|\theta|&\geq\tilde{R}_i,\quad \forall i \label{E12}\\
\varphi_{ii} - 2|\theta| &>\frac{\pi}{2},\quad \forall i\label{E11}
\end{align}
with $\varphi_{ii}$ is the argument of $z_{ii}$, and $\theta$ is the argument of $\varrho$.
\end{lemma}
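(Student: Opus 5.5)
The plan is to apply the Gershgorin circle theorem directly to $\varrho\mathcal{Z}$. Its $i$-th disc has centre $\varrho z_{ii}$ and radius $|\varrho|\tilde R_i$. Here I read $\tilde R_i=\sum_{j\neq i}|z_{ij}|$ as the Gershgorin radius of $\mathcal{Z}$, and $r_{ii}=|z_{ii}|$. Since every eigenvalue lies in the union of these discs, it suffices to show that each disc lies in the open left half-plane. That is, for every $i$ we want
\begin{align*}
\mathrm{Re}(\varrho z_{ii})+|\varrho|\tilde R_i<0 .
\end{align*}
Dividing by $|\varrho|>0$ and writing $z_{ii}=r_{ii}e^{\mathrm{j}\varphi_{ii}}$, $\varrho=|\varrho|e^{\mathrm{j}\theta}$, this becomes
\begin{align*}
r_{ii}\cos(\varphi_{ii}+\theta)<-\tilde R_i .
\end{align*}

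First I normalise the angles. Because the Gershgorin discs of $\mathcal{Z}$ lie in the left half-plane, each $z_{ii}$ has negative real part. Replacing $\mathcal{Z},\varrho$ by their complex conjugates if necessary (which conjugates the spectrum and so does not affect the conclusion), I take $\varphi_{ii}\in(\pi/2,\pi]$ and $\theta\in(-\pi/2,\pi/2)$. With this convention the hypothesis (\ref{E11}) is the intended reading, and the conjugation symmetry accounts for the appearance of $|\theta|$.

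Second, I bound the angle of each centre. By (\ref{E11}),
\begin{align*}
\frac{\pi}{2}+|\theta|<\varphi_{ii}-|\theta|\le\varphi_{ii}+\theta\le\varphi_{ii}+|\theta|\le\pi+|\theta| .
\end{align*}
Hence $\varphi_{ii}+\theta$ lies in the interval $(\pi/2+|\theta|,\;\pi+|\theta|]$. Cosine is bounded above by $\cos(\pi/2+|\theta|)=-\sin|\theta|$ on this interval: it decreases on $[\pi/2+|\theta|,\pi]$, and on $[\pi,\pi+|\theta|]$ it satisfies $\cos\le-\cos|\theta|\le-\sin|\theta|$ once $|\theta|\le\pi/4$. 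The bound $|\theta|<\pi/4$ is itself forced by (\ref{E11}), since $\varphi_{ii}\le\pi$ gives $2|\theta|<\varphi_{ii}-\pi/2\le\pi/2$. On the open part of the interval the inequality is strict. Therefore $\cos(\varphi_{ii}+\theta)<-\sin|\theta|$.

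Combining this with (\ref{E12}) gives
\begin{align*}
r_{ii}\cos(\varphi_{ii}+\theta)<-r_{ii}\sin|\theta|\le-\tilde R_i ,
\end{align*}
which is exactly the required disc condition. Note that $r_{ii}>0$ because $\mathrm{Re}\,z_{ii}<0$. Gershgorin's theorem then places every eigenvalue of $\varrho\mathcal{Z}$ in the open left half-plane.

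The main obstacle is the angle bookkeeping in the second step. One has to make sure the branch of the arguments is fixed consistently, via the conjugation symmetry, so that (\ref{E11}) really confines $\varphi_{ii}+\theta$ to an arc on which $\cos\le-\sin|\theta|$, including the wrap-around near $\pi$. If the paper intends $\tilde R_i$ to already include the factor $|\varrho|$, the same argument goes through after adjusting that factor in the final inequality.
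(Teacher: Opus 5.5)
Your proof is correct and takes the same route as the paper. In both, the Gershgorin discs of $\varrho\mathcal{Z}$ are those of $\mathcal{Z}$ rotated by $\theta$ and scaled by $|\varrho|$; you make explicit the trigonometric check, including the wrap-around past $\pi$ where $|\theta|<\pi/4$ is needed, that the paper only asserts through a figure and a ``worst case is clockwise rotation'' remark. One small correction: your conjugation normalization is not valid in general, since a single conjugation cannot move diagonal entries lying in both the second and third quadrants into $(\pi/2,\pi]$. It is also not needed, because with the principal argument the hypothesis $\varphi_{ii}-2|\theta|>\pi/2$ already forces $\varphi_{ii}\in(\pi/2+2|\theta|,\pi]$.
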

\begin{proof} Assume that $z_{ij}=\alpha_{ij}+i\beta_{ij}$ with $\alpha_{ij}<0$, and $\varrho=a\pm i b$ with $a>0$,
then we can rewrite them in polar form,
$$z_{ij}=r_{ij}e^{i\varphi_{ij}},\quad \mbox{and}\quad {\varrho}=\gamma e^{i\theta}$$
where $r_{ij}=|z_{ij}|=\sqrt{\alpha_{ij}^2+\beta_{ij}^2}$ and  $\varphi_{ij}= \mbox{arctan}\left(\frac{\beta_{ij}}{\alpha_{ij}}\right)$;
$\gamma=|\varrho|=\sqrt{a^2+b^2}$ and  $\theta= \mbox{arctan}\left(\frac{b}{a}\right)$.

And then, all $Ger\check{s}gorin$ discs of $\mathcal{Z}$ are  described as
$$
G_i = \{ z \in \mathbb{C} \mid |z - z_{ii}| \leq \tilde{R}_i \},\ \tilde{R}_i = \sum_{j \neq i}^n |z_{ij}| = \sum_{j \neq i}^n r_{ij}, \ \forall i,
$$
and are located in the left half-plane, such as the green disc in Fig. \ref{f22}.
While $\theta \in[-\frac{\pi}{2},\frac{\pi}{2}]$, because $\varrho$ is in the right half-plane.

Next, before analyzing the matrix $\varrho \mathcal{Z}$, we first consider $\mathcal{Z}  e^{i\theta}$, that is,
$$\mathcal{Z} e^{i\theta}:=[z_{ij}e^{i\theta}]=[ r_{ij} e^{i(\varphi_{ij}+\theta)}].$$
Therefore, $|z_{ij}e^{i\theta}|=r_{ij}$, for $i\neq j$.
Obviously, the factor $e^{i\theta}$ does not change the radius $\tilde{R}_i$ of the $Ger\check{s}gorin$ discs in the matrix $\mathcal{Z}$.
However, it rotates the vector (the center of the disc $O$) clockwise to $O_1$ by the angle $\theta$ in the worst case ($\varrho=a-ib$).
So, the constraints (\ref{E12}) and (\ref{E11}) guaranty all $G_i$ are in the left half-plane, see Fig. \ref{f22} in detail.
While the factor $\gamma$ just scales proportionally the center and radius of these discs, it does not change the conclusion before, such as disc $O_1^\prime$.
Therefore, all eigenvalues of the
matrix $\varrho \mathcal{Z}$ must be located in the open left half-plane. \hfill$\blacksquare$
 \begin{figure}[H]
  \centering
  \includegraphics[width=3.5in]{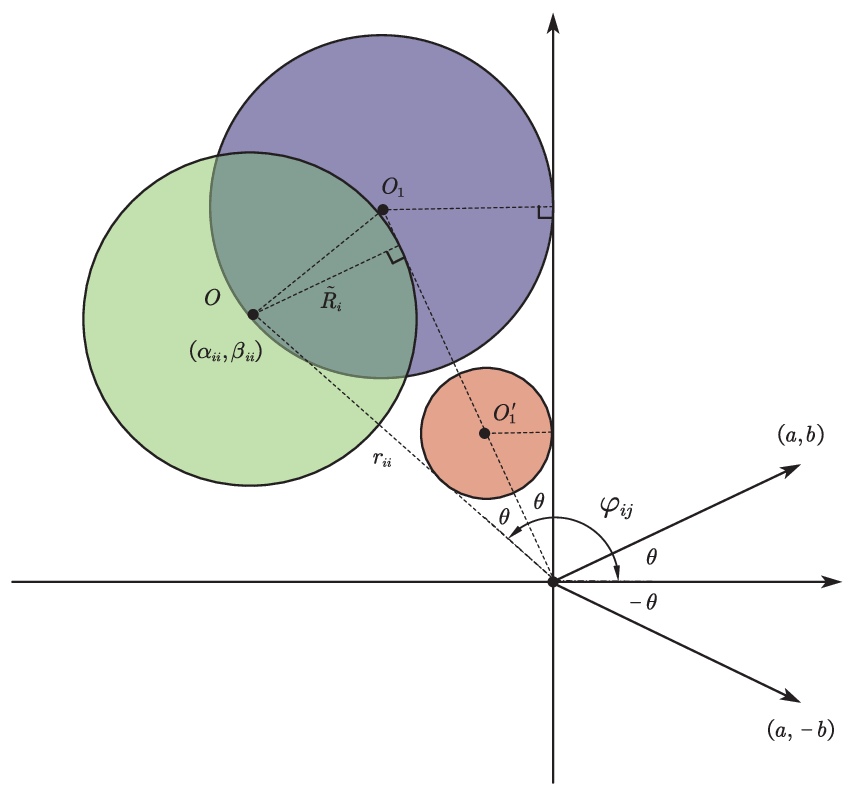}
  \centering \caption{Schematic diagram on Lemma \ref{l1}}
\end{figure}\label{f22}
\end{proof}
\begin{lemma}\label{l2}
Assume that $\mathcal{Z}$ is a complex matrix with all $Ger\check{s}gorin$ discs being in the right half-plane or left half-plane, then the matrix $Re(\mathcal{Z})$ does not change the sign of every eigenvalue.
\end{lemma}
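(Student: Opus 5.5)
The plan is to apply the Ger\v{s}gorin disc theorem twice: once to $\mathcal{Z}$ and once to the entrywise real part $Re(\mathcal{Z})=[Re(z_{ij})]$. I will read ``does not change the sign of every eigenvalue'' as follows: the real parts of all eigenvalues of $Re(\mathcal{Z})$ have the same sign as the real parts of all eigenvalues of $\mathcal{Z}$. By symmetry (replace $\mathcal{Z}$ by $-\mathcal{Z}$) it suffices to treat the case where all discs lie in the left half-plane.

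\emph{Step 1 (eigenvalues of $\mathcal{Z}$).} Write $z_{ij}=\alpha_{ij}+i\beta_{ij}$ and $\tilde{R}_i=\sum_{j\neq i}|z_{ij}|$, as in the proof of Lemma~\ref{l1}. The disc $G_i=\{z:|z-z_{ii}|\le \tilde{R}_i\}$ lies in the open left half-plane exactly when its rightmost point does, that is, when
\[
\alpha_{ii}+\tilde{R}_i<0 ,\quad \forall i .
\]
By Ger\v{s}gorin's theorem, every eigenvalue of $\mathcal{Z}$ lies in $\bigcup_i G_i$, so every eigenvalue of $\mathcal{Z}$ has negative real part.

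\emph{Step 2 (discs of $Re(\mathcal{Z})$).} The $i$-th disc of $Re(\mathcal{Z})$ has center $\alpha_{ii}$ and radius $\hat{R}_i=\sum_{j\neq i}|\alpha_{ij}|$. Since $|\alpha_{ij}|\le\sqrt{\alpha_{ij}^2+\beta_{ij}^2}=|z_{ij}|$, we get $\hat{R}_i\le\tilde{R}_i$. Hence
\[
\alpha_{ii}+\hat{R}_i\le\alpha_{ii}+\tilde{R}_i<0 ,
\]
so every disc of $Re(\mathcal{Z})$ is contained in the open left half-plane. Geometrically, taking real parts keeps the center's horizontal coordinate, drops it onto the real axis, and shrinks the radius. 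Ger\v{s}gorin's theorem then puts every eigenvalue of $Re(\mathcal{Z})$ in the open left half-plane, which has the same sign as in Step~1. The right half-plane case is identical, using the leftmost points $\alpha_{ii}-\tilde{R}_i>0$.

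\emph{Main obstacle.} The inequalities themselves are routine, so the main difficulty is making the statement precise. First, ``sign'' must be read as the sign of the real part, since the eigenvalues are complex in general. Second, if the hypothesis only gives discs in the \emph{closed} half-plane, some disc may touch the imaginary axis, and then neither matrix is guaranteed to be sign-definite. I would therefore state the hypothesis as containment in the open half-plane, matching the strict inequality above; with that reading, both steps go through directly.
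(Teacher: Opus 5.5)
Your proposal is correct and follows the same route as the paper: apply Ger\v{s}gorin's theorem to $Re(\mathcal{Z})$, whose $i$-th disc keeps the real part $\alpha_{ii}$ of the center and has radius at most $\tilde{R}_i$ because each off-diagonal real part is bounded in modulus by $r_{ij}=|z_{ij}|$, so the disc stays in the same half-plane. Your version is slightly more careful than the paper's: the paper writes the new radius as $\sum_{j\neq i}\alpha_{ij}$ without absolute values, and it leaves implicit both the rightmost-point check $\alpha_{ii}+\hat{R}_i\le\alpha_{ii}+\tilde{R}_i<0$ and the need for the open half-plane.
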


\begin{proof}
Using the symbol definitions as stated in the proof of Lemma \ref{l1}, all the $Ger\check{s}gorin$ discs are shown by
$$
G_i = \{ z \in \mathbb{C} \mid |z - z_{ii}| \leq \tilde{R}_i \},\quad\tilde{R}_i = \sum_{j \neq i}^n r_{ij},\quad \forall i.
$$

While the ones of the matrix $Re(\mathcal{Z})$ are
$$
G_i = \{ z \in \mathbb{C} \mid |z - \alpha_{ii}| \leq \tilde{R}_i^\prime \},\quad\tilde{R}_i^\prime = \sum_{j \neq i}^n \alpha_{ij},\quad \forall i.
$$

Since $\alpha_{ij}\leq r_{ij}$, $\tilde{R}_i^\prime\leq \tilde{R}_i$. This implies that the $Ger\check{s}gorin$ discs are reduced in size. Therefore, the result is true. \hfill$\blacksquare$

Both results will be used in the design of the inner coupling matrix for the complex network in the next section.
\end{proof} 

\section{Synchronization on complex network}\label{s1}
Here we discuss the design problem of the inner coupling matrix. It refers not only to the synchronization for coupled dynamics,
but also to their transient performance, such as synchronization speed and overshot.

First we consider the linear model of the complex network,
\begin{align}\label{E2}
\dot{x}_i=&A{x_i}+ \sigma \sum_{j=1}^N{G_{ij}H x_{j}}, i=1,2,\ldots,N
\end{align}
where $x_i \in \R^{n}$ is the system state.
$A$ is the isolated (uncoupled) dynamics. $H$ is the inner coupling matrix for the output of each node; 
and then $\sigma$ is the coupling strength. $N$ is the number of the node.
$G_{ij}$ is the connecting weight associated with other nodes. 

{\color{red}\textbf{Problem:} The aim is to design an inner coupling matrix to meet the desired transient performance for the complex network (\ref{E2}), namely, the pole can be assigned at will in each transverse direction to the synchronization manifold, by synthesizing the inner coupling matrix.  }

A compact form of Eq. (\ref{E2}) is
\begin{align}\label{E33}
\dot{x}=&\hat{A}{x}+ \sigma G \otimes H x,
\end{align}
where $\hat{A}=(I_N\otimes A)$. $G$ is the coupling configuration (connection) matrix, such as 
$$
G_{ii}=\sum_{j = i, j\neq i}^N G_{ij}.
$$
It is also constant in the synchronization manifold.

For convenience in discussion later, letting $G=-\mathcal{L}$ and $\tilde{H}=-H$, Eq. (\ref{E33}) can be rewritten as 
\begin{align}\label{E3}
\dot{x}=&\hat{A}{x}+ \sigma \mathcal{L} \otimes \tilde{H} x.
\end{align}
There exists a nonsingular matrix $U \in \R^{N\times N}$, $UU^{-1}$, without loss of generality, such that
\begin{align*}
U^{-1}\mathcal{L}U=\mbox{diag}\bigl(J_{k_1}(\lambda_1),\ J_{k_2}(\lambda_2),\ \dots,\ J_{k_N}(\lambda_N)\bigr), \\ 
\end{align*}
where 
$0=\lambda_1,\lambda_2,\cdots,\lambda_{{N}}$ are the eigenvalues of the Laplace matrix and 
\begin{align*}
J_{k_i}(\lambda_i) =
\begin{bmatrix}
\lambda_i & 1 & & \\
& \lambda_i & \ddots & \\
& & \ddots & 1 \\
& & & \lambda_i
\end{bmatrix}.
\end{align*}

And then Eq. (\ref{E3}) is equivalent to  a set of decoupled systems
\begin{align}\label{E4}
\dot{\hat{x}}_i=&(A+ \sigma \lambda_i \tilde{H}) \hat{x}_{i}.
\end{align}
It is well known that the network system achieves synchronization, if the designed coupling matrix $\tilde{H}$ satisfies
\begin{align}\label{E5}
(A + c \lambda_i \tilde{H})\ \mbox{is Hurwitz},\quad i=2,3,\ldots,N.
\end{align}
By the linear transformation $\hat{x}_i=P\tilde{x}_i$ agian, without loss of generality,  Eq.(\ref{E4}) is rewritten as
\begin{align}\label{E6}
\tilde{x}_i=&(\Lambda+ \sigma {\lambda}_i \mathcal{H}) \tilde{x}_{i},
\end{align}
where  $\mathcal{H}:=[h_{ij}]=P^{-1}\tilde{H}P$, $\Lambda=P^{-1}AP=
\mbox{diag}\bigl(J_{\kappa_1}(\bar{\lambda}_1),\ J_{\kappa_2}(\bar{\lambda}_2),\ \dots,\ J_{\kappa_N}(\bar{\lambda}_n)\bigr)$. ${\bar{\lambda}}_i$ is the eigenvalue of the system matrix $A$.
{\color{red}In fact, we can not only configure the coupling matrix $\mathcal{H}$, such as
\begin{align}\label{E7}
(\Lambda+ \sigma {\lambda}_{i} \mathcal{H})\ \mbox{is Hurwitz},\quad i=2,3,\ldots,N,
\end{align}
but also assign the pole in each transverse direction to the synchronization manifold.}

Given that the topology of the network is fixed being subject to actual conditions, we first start from the simple case, that is, the topology is undirected.
Without loss of generality, let the coupling strength $\sigma=1$ in the following statement.

\subsection{Undirected topology}

The eigenvalues of the Laplace matrix $\lambda_{i}$ is a real number corresponding to an undirected topology. Then it has $\lambda_1=0\leq\lambda_2\leq\cdots\leq\lambda_{{N}}$.
Eq.(\ref{E7}) is true as long as we design the proper $\mathcal{H}$ with the minimum eigenvalue $\lambda_2$.
For simplicity, we set $\mathcal{H}$ as a diagonal matrix with the same element $h_{ii}$, sucn as
$$
\lambda_2Re(h_{ii})+ \lambda_{\max}(Re(\bar{\lambda}_k))<0, \quad k=1,\cdots,n.
$$
Then, it has
\begin{align}\label{E8}
Re(h_{ii})<- \frac{\lambda_{\max}(Re(\bar{\lambda}_k))}{\lambda_2}.
\end{align}

By Lemma \ref{l2}, the inner coupling matrix ${H}$ is
\begin{align}\label{E9}
H=-\tilde{H},
\end{align}
where $\tilde{H}= Re(P\mathcal{H}P^{-1})$.

{\color{red}It is too rough, although the above method is simple. Many evolution of the spontaneous synchronization can not be govern accurately.
Actually, the configuration of $H$ can be set using the more accurate method in all transverse directions to the synchronization manifold. 
For example, the desired pole $p_i$ can be assigned by the following way,
\begin{align}\label{E80}
Re(h_{ii})=- \frac{\lambda_{\max}(Re(\bar{\lambda}_k))-p_i}{\lambda_2}.
\end{align}}

It is worth noting that the coupling matrix $H$ must be a real matrix for implementation.
Therefore, $H$ must take the real part of $P\mathcal{H}P^{-1}$, which is calculated by the numerical calculation.
{By Lemma \ref{l2}, the operation does not change the sign of the eigenvalues of the matrix $H$. In addition, since the radii of the discs are reduced, the eigenvalues are converging towards the center of the circle. Thus, it is more beneficial for the stability of the system.}

\textbf{Example 1:}
Assume that the isolated dynamic system of the complex network system is
$$
A = \begin{pmatrix}
   0  & -10  &  0 &  0 &   0\\
   10  &  0  &  0 &  0 &   0\\
   0  &  0  &  -3 &  0 &   0\\
   0  &  0   & 0 & 0 &   -10\\
   0  &  0   & 0 &  0 &   0\\
\end{pmatrix},$$
and the Laplace matrix is symmetrical with respect to the undirected topology, such as
$$
\mathcal{L}=\begin{pmatrix}
1 & -1 & 0 & 0 & 0 \\
-1 & 2 & -1& 0 & 0 \\
0 & -1 & 2 & -1 & 0 \\
0 & 0 & -1 & 2 & -1 \\
0 & 0 & 0 & -1 & 1
\end{pmatrix},$$
with $\lambda_2=0.3820$.

By the linear transformation, we can get
$
\Lambda =\mbox{diag}(\pm 10i,\  -3,\  0,\ 0)$. 
In light of (\ref{E8}), we take $h_{ii}$ as $20$ for all $i$ in $\mathcal{H}$ and obtain the coupling matrix $H$ by Eq.(\ref{E9}).
$$
{H}_1= \begin{pmatrix}
-20 & 0 & 0 & 0 & 0 \\
0 & -20 & 0 & 0 & 0 \\
0 & 0 & -20 &0 & 0 \\
0 & 0 & 0 & -20 &0 \\
0 & 0 & 0 & 0 & -20
\end{pmatrix}.
$$
From the synchronization evolution of the network in Fig.\ref{subfig:1},
we clearly observe that the states of the five nodes tend to synchronize, respectively, in each mode (The red and green lines are $x_{i1}$ and $x_{i2}$, respectively. They are persistent oscillations, since both modes are imaginary numbers; the blue $x_{i3}$ is exponential decay; the black $x_{i4}$ is a ramp response; while the yellow $x_{i5}$) remains a constant, ultimately.

To show the influence of the inner coupling matrix $H$ on the evolution of synchronization, we redesign $h_{44}$ and $h_{55}$ as $30$ without other change.
By Eq.(\ref{E9}), $H$ is calculated as
$$
{H}_2= \begin{pmatrix}
-20 & 0 & 0 & 0 & 0 \\
0 & -20 & 0 & 0 & 0 \\
0 & 0 & -20 &0 & 0 \\
0 & 0 & 0 & -30 &0 \\
0 & 0 & 0 & 0 & -30
\end{pmatrix}.
$$
It results in a synchronization evolution in Fig.\ref{subfig:2}. 
Apparently, the synchronization evolution for the states $x_4$ and $x_5$ of every node 
(that is, the black and yellow lines $x_{i4},x_{i5}$) is much faster than that in Fig.\ref{subfig:1}.
The reason lies in that the real parts of their Lyapunov exponents are much smaller than before in their transverse directions, further shortening their synchronization time.
By the simulations in Fig.\ref{subfig:1} and Fig.\ref{subfig:2}, a proper design of the $H$ matrix not only enables the system to achieve synchronization, but also improves its evolution performance.

Next, we introduce the coupling matrix $\mathcal{H}_3$ between $x_{i4}$ and $x_{i5}$ by setting $h_{45,}=h_{5,4}=10$ in $\mathcal{H}_2$. Then, $H$ matrix is calculated as 
$$
{H}_3= \begin{pmatrix}
-20 & 0 & 0 & 0 & 0 \\
0 & -20 & 0 & 0 & 0 \\
0 & 0 & -20 &0 & 0 \\
0 & 0 & 0 & -30 & 0 \\
0 & 0 & 0 & 0 & -10
\end{pmatrix}.
$$

Actually, the result is counterintuitive, that is, by the coupling matrix ${H}_3$, enhancing the coupling between states did not improve performance, in contrast, even worse, such as the black and yellow lines converge more slowly, from Fig.\ref{subfig:3}. 
The reason is that the two added coupling gains $h_{45,}$ and $h_{5,4}$ increase the eigenvalues in the corresponding to the transverse direction.

\begin{remark}
From Example 1, we know that the coupled matrix ($\mathcal{H}_3$), which seems to be complex, does not necessarily lead to better results than the simple situation ($\mathcal{H}_2$).
In addition, here the desired performance can be completed by pole placement.
{Therefore, how to design the $H$ to meet the desired performance with the least coupling
is a new challenging work.} Of cause, the diagonal matrices should be considered first. \hfill$\blacksquare$
\end{remark}

\subsection{Directed topology}
Now, Let us discuss the more general case, i.e., directed topology.
In this case, the eigenvalues of the Laplace matrix $\lambda_{i}$ may well be a complex number.
If there is one (or multiple) pair of complex eigenvalues to ensure that Eq.(\ref{E7}) holds,
the matrix $\mathcal{H}$, which consists of the complex matrix, can steer the network to achieve synchronization by  Eq.(\ref{E9}).
Also, for the sake of simplification, we show the case, that is, one pair of complex eigenvalues.

First, take $\lambda_2$ as the eigenvalue of the largest argument.
Assume that $\lambda_2=\alpha_2 -i\beta_2$ (the worst case), then its amplitude is $|\lambda_2|=\sqrt{\alpha_2^2+\beta_2^2}$,
and the argument is $-\theta_{max}$. The pair of complex vectors $h_{ii,i+1i+1}=\alpha_{ii} \pm i\beta_{ii}$ can be designed in diagonal matrix $\mathcal{H}$, such that
\begin{align}\label{E10}
\quad\varphi_{ii} - \theta_{max}&>\frac{\pi}{2}
\end{align}
where $\varphi_{ii}=arctan{\frac{\beta_{ii}}{\alpha_{ii}}}$, see Fig. \ref{f2}. By Lemma \ref{l1}, we know that $\lambda_2 \mathcal{H}$ is negative definite with the constraint (\ref{E10}).
\begin{figure}[H]

    \begin{subfigure}{3.5in}
        \includegraphics[width=3.0in]{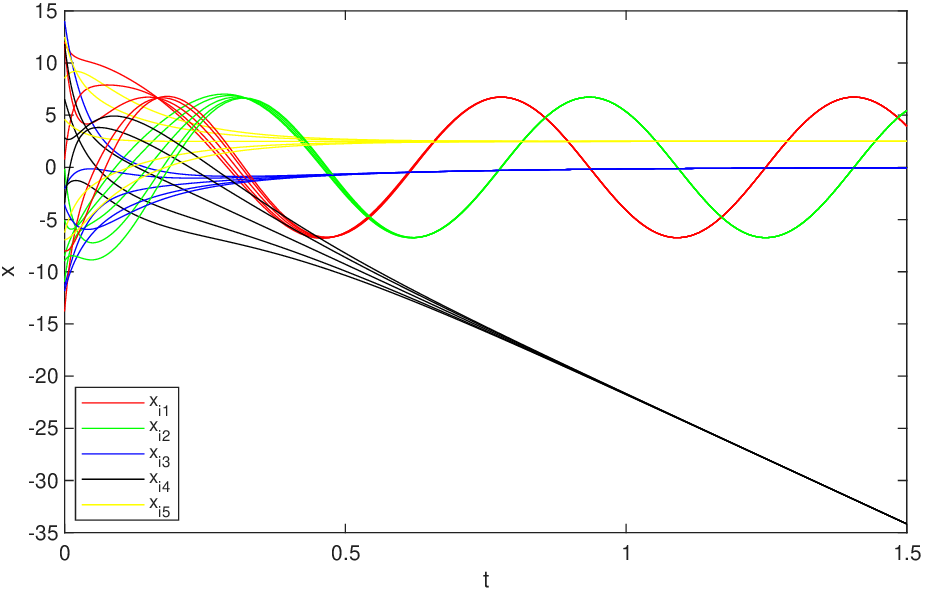}  
        \caption{The synchronization evolution for ${H}_1$}
        \label{subfig:1}
    \end{subfigure}
    \vspace{0.5cm}  

    \begin{subfigure}{3.5in}
        \includegraphics[width=3.0in]{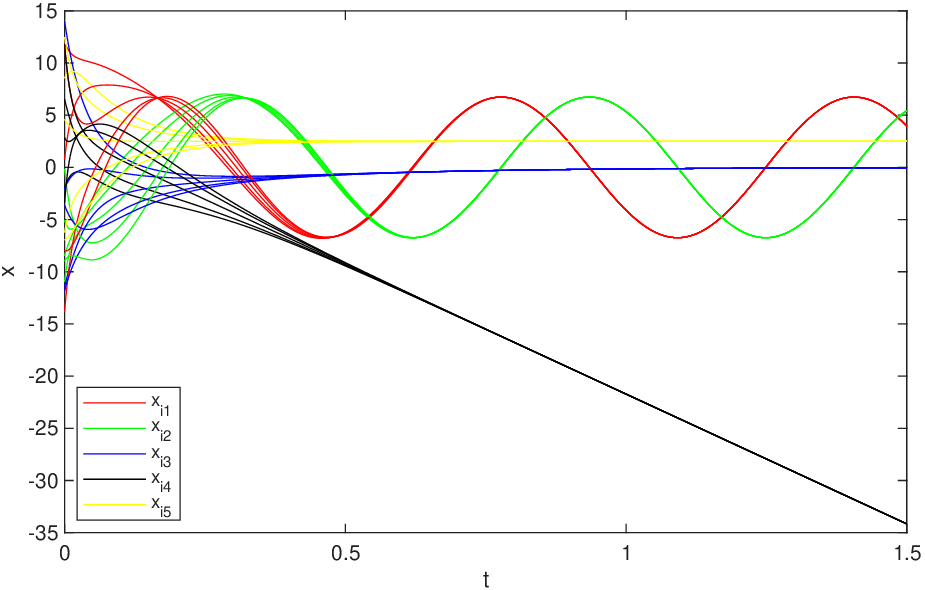}  
        \caption{The synchronization evolution for ${H}_2$}
        \label{subfig:2}
    \end{subfigure}
        \vspace{0.5cm}

      \begin{subfigure}{3.5in}
        \includegraphics[width=3.0in]{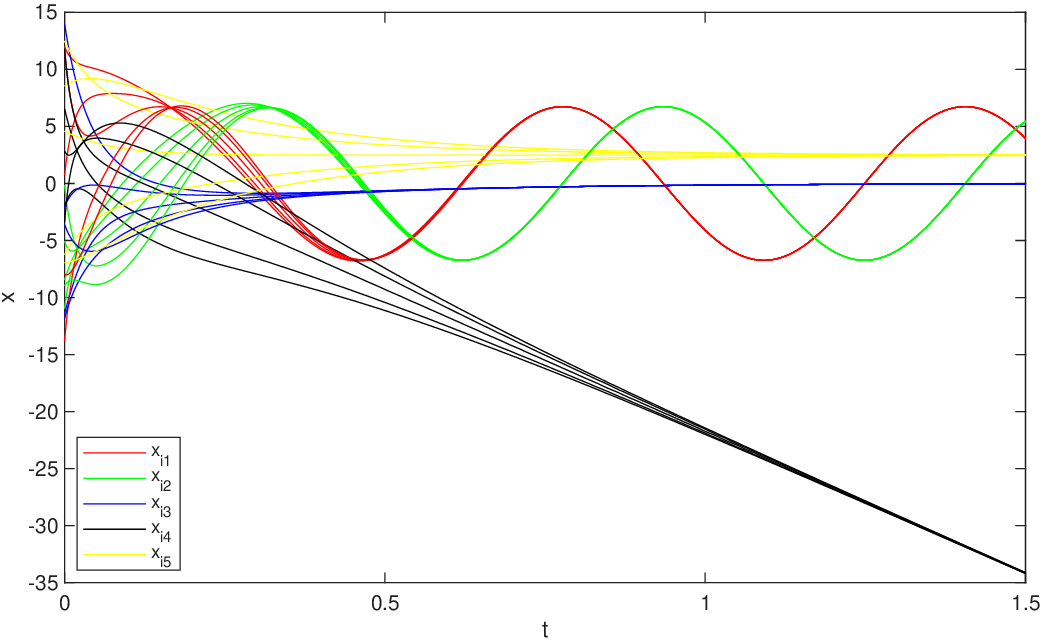}  
        \caption{The synchronization evolution for ${H}_3$}
        \label{subfig:3}
    \end{subfigure}
\centering
    \caption{Configuration results for the different inner coupling matrix $H$ over undirected topology}
    \label{f11}
\end{figure}

Secondly, Eq.(\ref{E7}) holds, if the selective $\mathcal{H}=[h_{ij}]$ satisfies the following condition,
\begin{align}\label{E100}
\lambda_2Re(h_{ii})+ \lambda_{\max}(Re(\bar{\lambda}_k))<0.
\end{align}
{\color{red}In addition, the desired pole $p_i$ can be assigned by the following way,
\begin{align}\label{E800}
Re(h_{ii})=- \frac{\lambda_{\max}(Re(\bar{\lambda}_k))-p_i}{\lambda_2}.
\end{align}}
 \begin{figure}[H]
  \centering
  \includegraphics[width=3in]{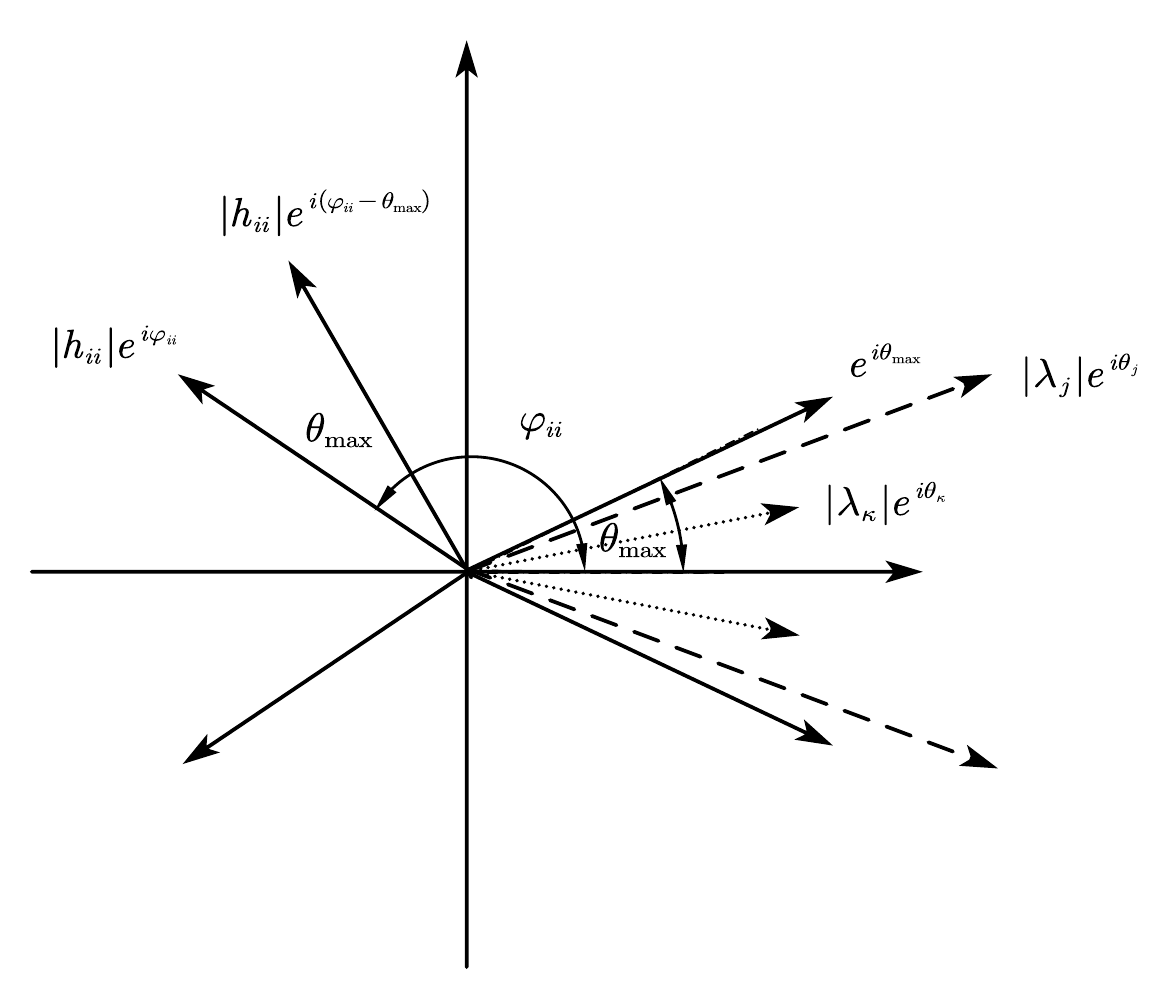}
  \centering \caption{Schematic diagram of the matrix $\mathcal{H}$ over the directed topology}\label{f2}
\end{figure}
\begin{remark}
From Fig. \ref{f2}, if only Eq. (\ref{E10}) holds, the eigenvalues of $\lambda_2 \mathcal{H}$ are negative, no matter how large the amplitude of $\lambda_i$ is.
However, it only ensures the network synchronization with the constraint (\ref{E100}). While the improvement in performance depends on $\varphi_{ii}$.
The proper argument $\varphi_{ii}$ is more useful for improving the overshot corresponding to the transverse direction, since $\varphi_{ii}$ can change the argument of the poles of the system.
Therefore, this kind of configuration method not only ensures network synchronization but also improves the transient performance of synchronization.
In addition, Eq.(\ref{E7}) also holds by choosing the enough large coupling strength $\sigma$.
Namely, whether the coupling strength $\sigma$ or matrix $\mathcal{H}$ both must ensure the Lyapunov exponents is negative correspond to transverse direction.
Finally, the inner coupling matrix $H$ can be designed by Eq.(\ref{E9}). The differences lie in that $\sigma$ adjusts the stability and performance of the overall system,
while $\mathcal{H}$ can precisely improve the synchronization performance of the transverse direction. \hfill$\blacksquare$
\end{remark}
Note that, in fact, by Lemma \ref{l1} matrix $\mathcal{H}$ can be configured in a more diverse manner in the form of a non-diagonal matrix.

\textbf{Example 2:}
Here, assume the dynamic of system is
$$
A = \begin{pmatrix}
   -1  & -3\\
   3  &  1
\end{pmatrix},$$
and the topology is directed with complex eigenvalue $\lambda_{3,4}= 0.9924\pm 0.5131i$ ($\theta_{max}=27.3399^\circ$), such as
$$
\mathcal{L}=\begin{pmatrix}
1 &  0 & 0 & 0 & -1 \\
-1 & 2 & -1& 0 & 0 \\
0 & -1 & 2 & -1 & 0 \\
0 & 0 & -1 & 2 & -1 \\
0 & 0 & 0 & -1 & 1
\end{pmatrix}.$$
In light of (\ref{E10}), we take $h_{1,2}=1\pm 0.5i$ with the argument $\varphi_{ii}=\pm 153.4349^\circ$ and get the inner coupling matrix $H$ by Eq.(\ref{E9}),
$$
{H}_4=\begin{pmatrix}
-1.1768 & -0.5303  \\
0.5303 & -0.8232 \\
\end{pmatrix}.$$

From Fig.\ref{subfig:4}, the two states of five nodes achieve agreement, respectively. The red lines denote the responses for the first mode of dynamic $A$. The green lines are those of the second one.

In order to show that the argument of the designed $h_{ii}$ in $\mathcal{H}$ impacts the system performance,
we reserve a greater margin for the argument ($\varphi_{ii}=\pm 174.2894^\circ$), taking $h_{1,2}=1\pm 0.1i$.
The coupling matrix $H$ can be calculated as
$$
{H}_5=\begin{pmatrix}
-1.0345 & 0.1061  \\
-0.1061 & -0.9646 \\
\end{pmatrix}.$$

Apparently, we observe that the argument margin is greater, the overshot of synchronization error is smaller, from the Figs. \ref{subfig:4} and \ref{subfig:5}. Therefore, the overshot of the synchronization process is clearly improved.


\section{A Unified Viewpoint: Multia-gent Systems and Synchronization of Complex Networks}\label{s2}
In terms of the relationship between the consensus of multi-agent systems and the synchronization of complex networks, the unified viewpoint is presented by the literature \cite{Li2010}.
However, it just unifies the stability domain of both systems,  reaching agreement by the master stability functions.
Here we elaborate the equivalence property of both intrinsically by the following theorem.
\begin{theorem} \label{p1}
The synchronization problem of the complex network (\ref{E2}) is intrinsically the same as the consensus problem of the multi-agent system (\ref{e11}) and (\ref{e12}),
by the following relational expression,
\begin{align} \label{e14}
H=-BK,
\end{align}
where $H$ is the inner coupling matrix of the complex network system. $B$ is the input matrix of the multi-agent system.
$K$ is the designed feedback gain of the controller.
\end{theorem}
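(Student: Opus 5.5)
The plan is to write the multi-agent system (\ref{e11})--(\ref{e12}) in closed loop, put it into the same compact Kronecker form as (\ref{E3}), and compare it with the complex network term by term. The multi-agent system is not displayed before the statement, so I take it in its standard form: agent dynamics $\dot{x}_i = A x_i + B u_i$, with the distributed consensus protocol $u_i = K\sum_{j=1}^N a_{ij}(x_j - x_i)$. These share the isolated dynamics $A$ and the communication graph $\mathcal{T}$ with Laplacian $\mathcal{L}$.

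\textbf{Step 1: closed loop in compact form.} Substituting the protocol gives
\begin{align*}
\dot{x}_i = A x_i - BK\sum_{j=1}^N l_{ij} x_j ,
\end{align*}
or, in compact form, $\dot{x} = (I_N\otimes A)x - (\mathcal{L}\otimes BK)x$.

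\textbf{Step 2: the network in compact form.} By the Remark on $\mathcal{L}=-\mathbf{G}$, the network (\ref{E2}) with $\sigma=1$ reads $\dot{x} = \hat{A}x - (\mathcal{L}\otimes H)x$, which is the same as (\ref{E3}) with $\tilde{H}=-H$. Comparing the two compact forms, they coincide identically, for every initial condition, exactly when $\mathcal{L}\otimes H = \mathcal{L}\otimes BK$. If $\mathcal{L}\neq 0$, i.e.\ the graph has at least one edge, this holds if and only if $H = BK$. I will track the sign carefully against the paper's own equations (\ref{e11})--(\ref{e12}). If the protocol there is written as $u_i = K\sum_j l_{ij}x_j$, which is the convention giving $\dot{x}=\hat A x+\mathcal{L}\otimes BK\,x$, the identification becomes $\tilde{H}=BK$, i.e.\ $H=-BK$, which is (\ref{e14}). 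Fixing this sign convention consistently is the only delicate point in the identification.

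\textbf{Step 3: equivalence of the problems.} The trajectories of the two systems are then identical, so synchronization of (\ref{E2}), $x_i - x_j\to 0$, is literally the same as consensus of the multi-agent system. To make the equivalence of the \emph{design problems} explicit, I would apply the Jordan decomposition $U^{-1}\mathcal{L}U$ used to derive (\ref{E4}) to both systems. Both reduce to the transverse modes $A + \lambda_i \tilde{H}$ and $A+\lambda_i BK$, with the consensus condition ``$A+\lambda_i BK$ Hurwitz for $i=2,\ldots,N$'' matching (\ref{E5}). So any $K$ solving the consensus problem yields $H=-BK$ solving the synchronization problem.

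\textbf{Main obstacle: the converse direction.} Given an $H$, a $K$ with $H=-BK$ exists only if $\mathrm{range}(H)\subseteq\mathrm{range}(B)$. Two cases follow. When $B$ is square and invertible, e.g.\ $B=I_n$ in the full-actuation setting implicitly assumed when the states are ``measurable'', take $K=-B^{-1}H$. Otherwise take $K=-B^{\dagger}H$, and state the equivalence on the class of admissible inner coupling matrices $\{H:\ \mathrm{range}(H)\subseteq\mathrm{range}(B)\}$. I expect this to be the point needing the most care in making ``intrinsically the same'' precise.
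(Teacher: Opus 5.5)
Your proposal is correct and is essentially the paper's own argument: substitute the protocol into the agent dynamics, rewrite the closed loop in the form of (\ref{E2}), and read off the inner coupling matrix. The paper's protocol is $u_i=cK\sum_j a_{ij}(x_i-x_j)=cK\sum_j l_{ij}x_j$, so your second sign branch is the one that applies, giving $H=-BK$ with $\sigma=c$ (an identification the paper leaves implicit).

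Your Step 3 and the converse caveat go beyond the paper, whose proof consists only of this forward identification. The condition $\mathrm{range}(H)\subseteq\mathrm{range}(B)$ you isolate is exactly what is needed for $K=-B^{+}H$ in Fact~\ref{p2} to actually give $BK=-H$, and the paper does not state it.
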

Here we recall the multi-agent systems,
\begin{align} \label{e11}
\dot{x}_i (t) = Ax_i (t) + Bu_i (t),
\end{align}
where $x$ is the system states of agent $i$; $u_i$ is the control input of the agent $i$.
$A$ is the system matrix; $B$ is the control matrix. $(A,B)$ is controllable.

The control protocol is
\begin{align} \label{e12}
{u}_i (t) = cK \left( \sum_{j=1}^N a_{ij}(x_i-x_j) \right),
\end{align}
where $K$ is the feedback gain of the controller. $c$ is the coupling strength.

\begin{figure}[H]

    \begin{subfigure}{3.5in}
        \includegraphics[width=3.5in]{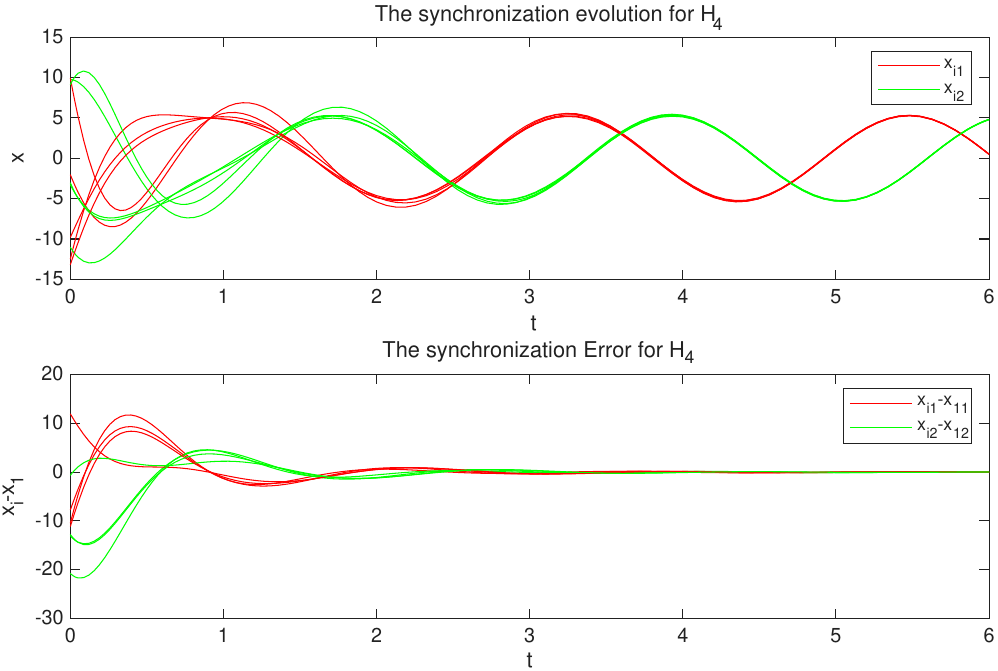}  
        \caption{The synchronization evolution for ${H}_4$}
        \label{subfig:4}
    \end{subfigure}
    \vspace{0.5cm}  

    \begin{subfigure}{3.5in}
        \includegraphics[width=3.5in]{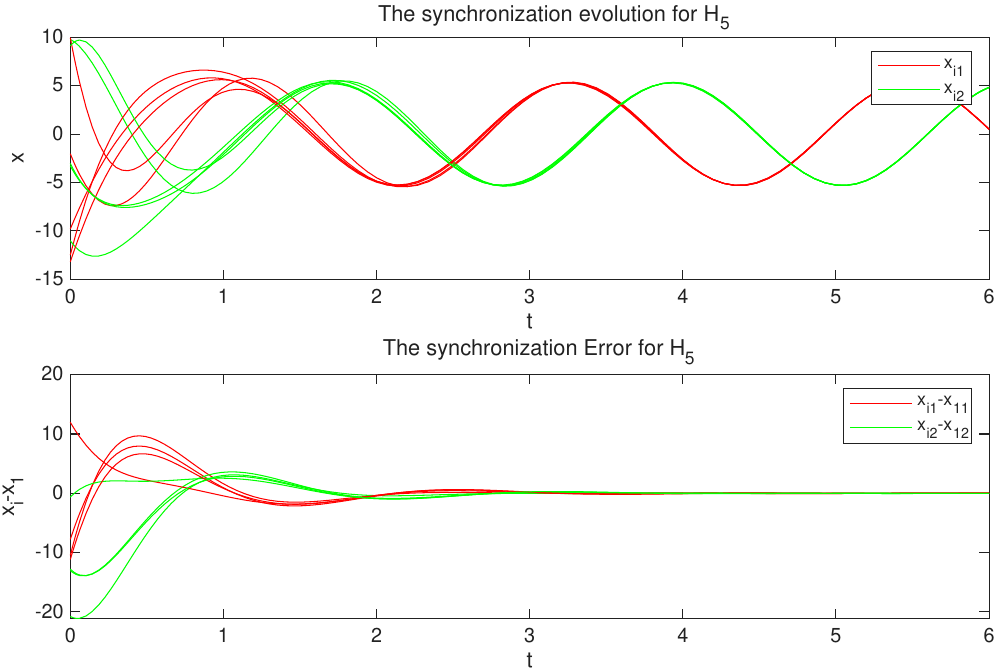}  
        \caption{The synchronization evolution for ${H}_5$}
        \label{subfig:5}
    \end{subfigure}
\centering
    \caption{Configuration results for the different coupling matrix $H$ over directed topology}
    \label{f3}
\end{figure}
\begin{proof}

Inserting (\ref{e12}) into (\ref{e11}),
\begin{align} \label{e13}
\dot{x}_i (t) &= Ax_i (t) +c BK \left( \sum_{j=1}^N a_{ij}(x_i-x_j) \right), \nonumber\\
              &= Ax_i (t) +c BK\left( \sum_{j=1}^N {a}_{ij} x_{j} \right), \nonumber\\
              &= Ax_i (t) +c \sum_{j=1}^N {a}_{ij} BK x_{j},
\end{align}
where
$$
{a}_{ii}=\sum_{j = i, j\neq i}^N a_{ij}.
$$
Compared with the complex network model (\ref{E2}),
$${H}=-BK.$$
Thus, Theorem \ref{p1} is true. \hfill$\blacksquare$
\end{proof}
\begin{remark}
From Theorem \ref{p1}, we know that the complex network synchronization problem can be solved by designing the controller for the multi-agent system, and vise versa. \hfill$\blacksquare$
\end{remark}
\subsection{Configure the inner Coupling matrix $H$ by the controller gain of multi-agent system}
Here we propose a new kind of method to configure the inner coupling matrix $H$ with the aid of the controller design for multi-agent system

First, the controllable matrix $B$ must be constructed according to the system $A$, so that $(A,B)$ is controllable.
Then design the control gain $K$ by the consensus method for the multi-agent system, such as LQR, Lyapunov function method, etc.
Finally, calculate the inner coupling matrix of the complex networks system by Eq. (\ref{e14}).

\textbf{Example 3:}
Here, to verify the effectiveness of our method, we use the agent dynamics in \cite{Li2011} as the isolated dynamic of the complex networks
$$
A = \begin{pmatrix}
   -2  & 1.5 \\
   -1  &  1
\end{pmatrix},
$$
and the Laplacian matrix is associated with the communication topology in Fig. \ref{f55}. 
$$
  L =\begin{pmatrix}
       5&  -1&  -1&  -1&  -1&   0\\
      -1&   2&  -1&   0&   0&   0\\
      -1&  -1&   2&   0&   0&   0\\
      -1&   0&   0&   2&  -1&   0\\
      -1&   0&   0&  -1&   3&  -1\\
       0&   0&   0&   0&  -1&   1
       \end{pmatrix}.
$$

Next, we configure the inner coupling matrix $H$ for this network using the multi-agent system method.
First, we construct a controllable matrix $B=[1\quad -1]^\prime$ (in fact, it is also from \cite{Li2011}), according to the system $A$.

And then, use the controller gain $K=[1\quad 0.9]$, and let $c=0.755$ (they are likewise from \cite{Li2011}).
Finally, the inner coupling matrix for this network can be constructed as
$$
H_6 = -\begin{pmatrix}
   1  & 0.9 \\
   -1  &  -0.9
\end{pmatrix}.
$$
The corresponding synchronization response is shown in Fig. \ref{subfig:6a}. The synchronization time is about $145s$.
\begin{figure}[H]
  \centering
  \includegraphics[width=2.5in]{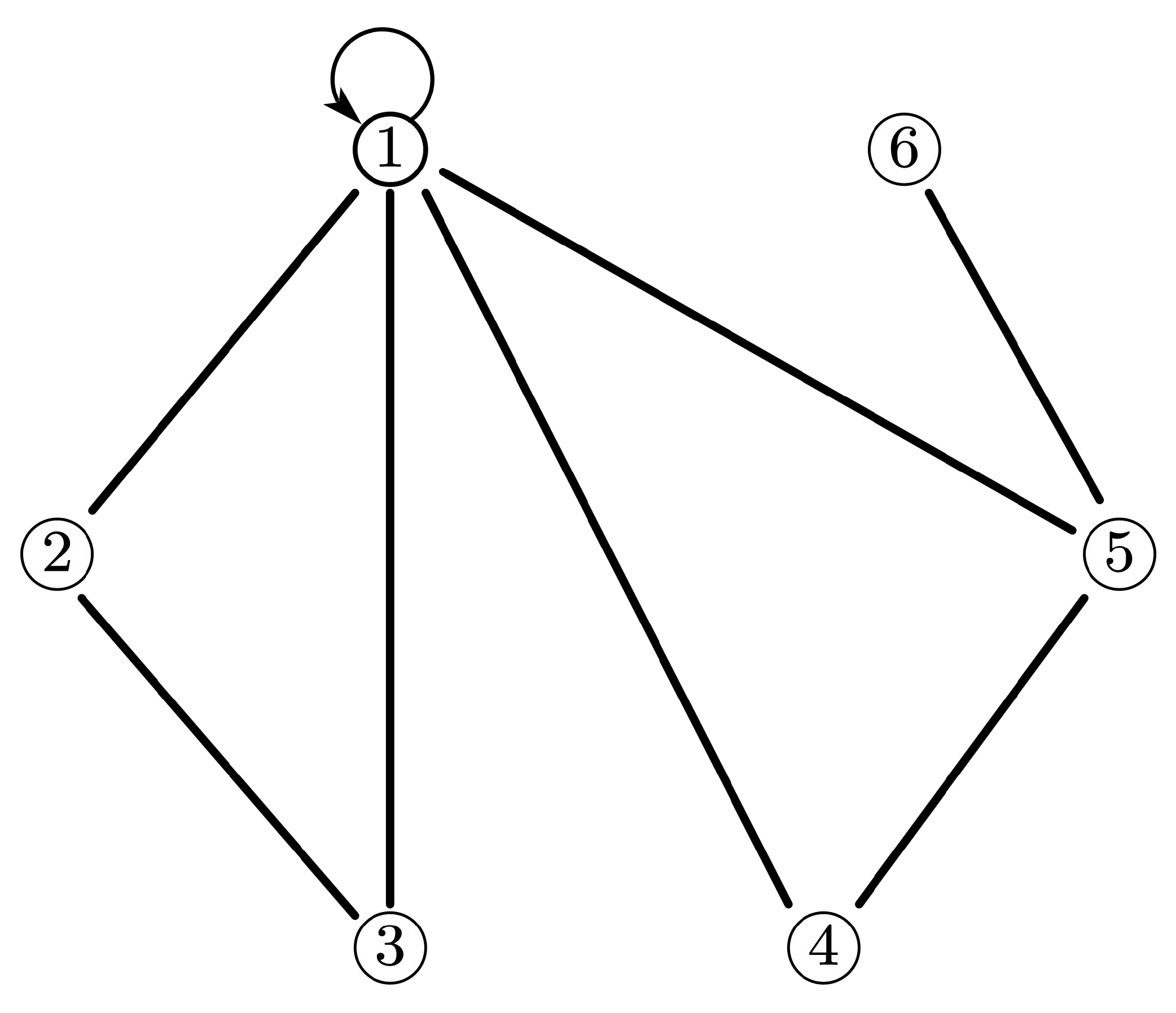}
  \centering \caption{The communication topology on Example 3 \cite{Li2011}}\label{f55}
\end{figure}
\begin{figure}[H]

    \begin{subfigure}{3.5in}
        \includegraphics[width=3.5in]{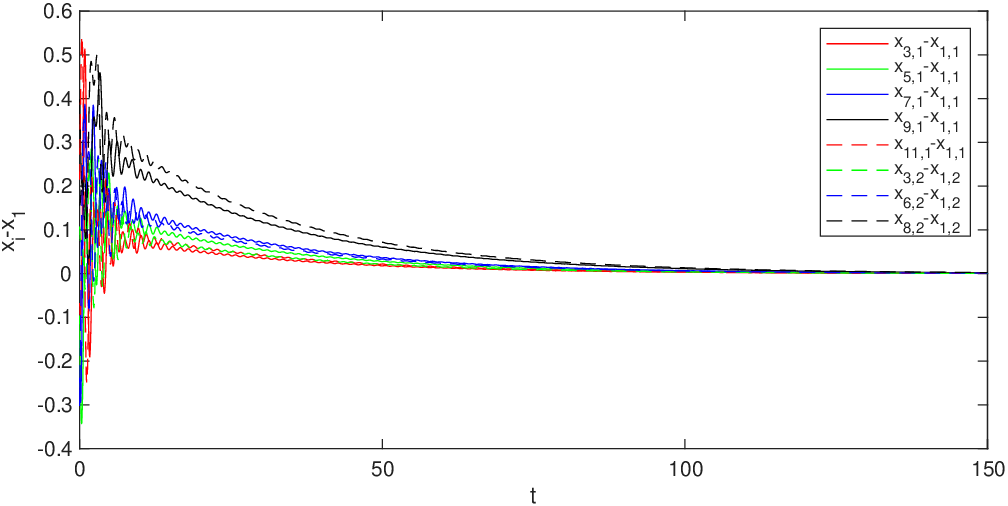}  
        \caption{Configuration result of the coupling matrix $H_6$}
        \label{subfig:6a}
    \end{subfigure}
    \vspace{0.5cm}  

    \begin{subfigure}{3.5in}
        \includegraphics[width=3.5in]{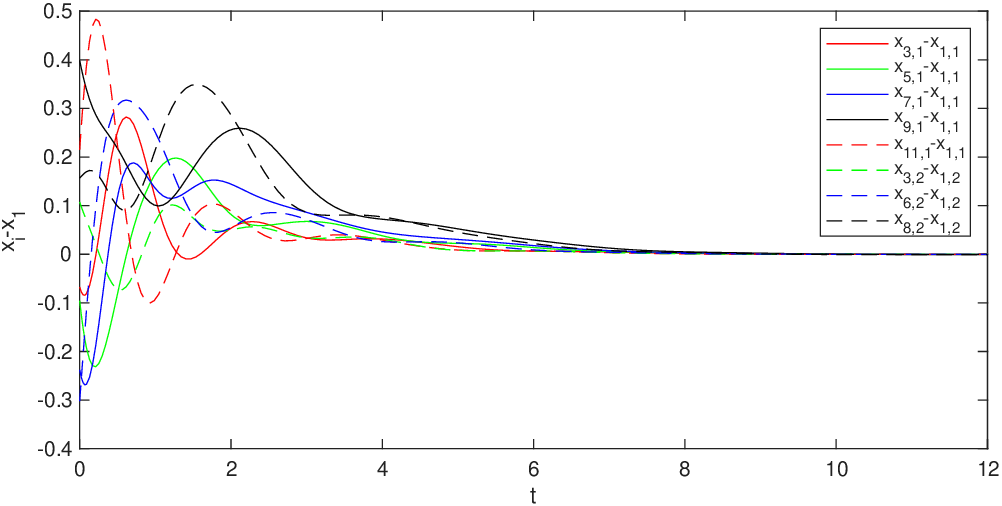}  
        \caption{Configuration result of the coupling matrix $H_7$}
        \label{subfig:6b}
    \end{subfigure}
\centering
    \caption{Configuration result of the coupling matrix $H$ for Fig. \ref{f55}}
    \label{f6}
\end{figure}

While we try to modify the input matrix as $B=[1\quad -2]^\prime$, the inner coupling matrix $H$ is updated, as follows,
$$
H_7 = -\begin{pmatrix}
   1  & 0.9 \\
   -2  &  -1.8
\end{pmatrix}.
$$
Under the effect of $H_7$, the performance of network synchronization obviously shows the shorter convergence time and smaller overshot in Fig. \ref{f6}.
It shows that the constructed input matrix $B$ affects the synchronization performance. Thus, how to construct it will be an open question.
Note that although this method can steer the network synchronization, it is hard to achieve at will pole placement, as in Section \ref{s1}. Undeniably, it is a drawback of this method. 

\subsection{Design controller gain $K$  by the inner coupling matrix of complex networks system}
In contrast, we can also design the controller gain $K$ for a multi-agent system using the inner coupling matrix $H$ of a complex networks system.
First, the inner coupling matrix $H$ is configured using the method in Section \ref{s1}.
Then, the controller gain $K$ of the multi-agent system can be calculated by the following formula from the transformation of Eq. (\ref{e14})
\begin{align} \label{e16}
K= - B^+ H,\quad   B^+=(B^TB)^{-1}B^T
\end{align}
Note that here the matrix $B$ must be full column rank and $(A,B)$ is controllable. There exists the following fact.
\begin{fact} \label{p2}
The consensus problem of the multi-agent system (\ref{e11}) and (\ref{e12}) can be solved by designing the controller gain $K$ (\ref{e16}), with the aid of the inner coupling matrix $H$ on the synchronization problem of the complex network (\ref{E2}),
if it satisfies three conditions,
 \begin{enumerate}
   \item $(A,B)$ is controllable;
   \item $B$ is full column rank;
   \item $B^+H \neq \textbf{0}$.
 \end{enumerate}
$K$ is the feedback gain of controller. $B$ is the input matrix of multi-agent system. $H$ is the configured coupling matrix of the complex networks system.
\end{fact}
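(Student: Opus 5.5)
The plan is to reduce Fact \ref{p2} to Theorem \ref{p1}. I would show that the gain (\ref{e16}) reproduces the relation (\ref{e14}) exactly, so that the closed-loop consensus dynamics coincide with the synchronizing network dynamics already configured in Section \ref{s1}.

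First, I would use condition 2 (full column rank of $B$). It makes $B^TB$ invertible, so $B^+=(B^TB)^{-1}B^T$ is well defined with $B^+B=I$, and $BB^+$ is the orthogonal projector onto $\mathrm{range}(B)$. Substituting (\ref{e16}) gives
\begin{align*}
BK=-BB^+H .
\end{align*}
Hence $BK=-H$, which is exactly (\ref{e14}), if and only if $BB^+H=H$, i.e. $\mathrm{range}(H)\subseteq\mathrm{range}(B)$. In that case $H=-BK_0$ for some $K_0$, and then $B^+H=-K_0$, so $K=K_0$. Condition 3, $B^+H\neq\mathbf{0}$, only excludes the degenerate case $K=\mathbf{0}$, in which the protocol (\ref{e12}) does nothing and consensus would require $A$ itself to be Hurwitz.

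Second, with (\ref{e14}) in force, I would insert (\ref{e12}) into (\ref{e11}) as in the proof of Theorem \ref{p1}. Stacking the states and applying the Jordan transformation $U$ of $\mathcal{L}$ used to obtain (\ref{E4}), the disagreement dynamics decouple into blocks governed by $A+c\lambda_i BK=A-c\lambda_i H$ for $i=2,\dots,N$, up to the sign convention $\tilde H=-H$ of (\ref{E3}). Since $H$ was configured in Section \ref{s1} so that (\ref{E5})/(\ref{E7}) holds, every such block is Hurwitz. The Jordan blocks with nonzero $\lambda_i$ only add nilpotent coupling, so they do not affect this conclusion, and consensus follows. Condition 1, controllability of $(A,B)$, is what guarantees that gains $K$ placing these transverse modes freely exist. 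It therefore makes the correspondence meaningful in the direction network $\to$ multi-agent system.

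The main obstacle is the range condition $BB^+H=H$. It is not implied by conditions 1--3, since $B^+H$ is only a least-squares solution of $BK=-H$. For a general $H$ produced by (\ref{E9}), the realized coupling is the projection $BB^+H$, and stability of $A-c\lambda_i BB^+H$ is not automatic. I would handle this in two ways.
\begin{itemize}
\item Restrict the design in Section \ref{s1} so that the columns of $\tilde H=Re(P\mathcal{H}P^{-1})$ lie in $\mathrm{range}(B)$, making (\ref{e16}) exact. This is automatic, for instance, when $B$ is square and invertible.
\item Otherwise, argue by continuity of eigenvalues. If $\|(I-BB^+)H\|$ is small relative to the stability margin of $A-c\lambda_i H$, then $A-c\lambda_i BB^+H$ remains Hurwitz for all $i\ge 2$.
\end{itemize}
I would state the Fact under the first, exact hypothesis and treat the second as a remark.
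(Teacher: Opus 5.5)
Your reduction follows the paper's own route. The paper's whole justification for Fact~\ref{p2} is the one-line ``transformation'' of (\ref{e14}): left-multiply $H=-BK$ by $B^+$, use $B^+B=I$, and then invoke Theorem~\ref{p1}. That step silently assumes $H$ already equals $-BK_0$ for some $K_0$, which is exactly your condition $BB^+H=H$. In the paper this holds in Example~4 only because $H_6$ was built as $-BK$ in Example~3.

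So the obstacle you identify is a genuine gap in the paper, not in your argument, and it is fatal to the statement as written. Take $A=\begin{pmatrix}1&1\\0&1\end{pmatrix}$, $B=\begin{pmatrix}0\\1\end{pmatrix}$, two nodes with $\lambda_2=2$, $c=\sigma=1$ and $H=I_2$.
The network's transverse matrix $A-2H$ has eigenvalues $-1,-1$, so the network synchronizes.
All three conditions hold: $(A,B)$ is controllable, $B$ has full column rank, and $B^+H=(0\;\;1)\neq\mathbf{0}$.
Yet $K=-B^+H=(0\;\;-1)$ gives $A+2BK=\begin{pmatrix}1&1\\0&-1\end{pmatrix}$, which has eigenvalue $1$, so consensus fails.

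Your first bullet therefore has to be a hypothesis of the Fact, not one of two options. With it, and with $c=\sigma$ (Section~\ref{s1} fixes $\sigma=1$), your argument is complete. Some further points:

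\begin{enumerate}
\item Conditions 1 and 3 then play no logical role. If $BB^+H=H$ and $B^+H=\mathbf{0}$, then $H=\mathbf{0}$. That synchronizes the network only when $A$ is Hurwitz, and then $K=\mathbf{0}$ also yields consensus.
\item The range condition fails for the typical output of the Section~\ref{s1} design whenever $B$ has fewer columns than $n$. That design produces full-rank $H$ (e.g.\ $H_1,\dots,H_5$). In practice the Fact only covers $H$ of rank at most $\mathrm{rank}\,B$, essentially $H$ that already came from some gain.
\item Your continuity remark needs a quantitative form. The perturbation of the $i$th block is $E_i=c\lambda_i(I-BB^+)H$, which scales with $|\lambda_i|$. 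For non-normal blocks the spectral abscissa is not a robust margin, so use a Lyapunov certificate. Let $M_i=A-c\lambda_iH$ and let $Q_i\succ 0$ solve $M_i^*Q_i+Q_iM_i=-I$. Then $2\|Q_i\|\,\|E_i\|<1$ for all $i\ge 2$ guarantees that $A-c\lambda_iBB^+H$ is Hurwitz.
\end{enumerate}
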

\textbf{Example 4:} In this example, the dynamic and the topology are from {Example 3}.
The control input matrix of the multi-agent system is set to $B=[1\quad -1]^\prime$.
Obviously, $B$ is full column rank and $(A,B)$ is controllable, then the control gain $K$ can be calculated by Eq. (\ref{e16}).
For example, in light of $H_6$, the corresponding control gain is $K=[1\quad 0.9]$. In fact, it is the inverse operation of Example 3.
Therefore, the simulation results are the same as those in Fig. \ref{f6}. 

\section{Synchronization on nonlinear complex networks system}\label{s3}
In fact, our method  can still be used in synchronization on the nonlinear complex networks system.
In general, the complex network consists of some oscillators that are seen as $N$ nodes.
Let the isolated (uncoupled) dynamics be
\[
\dot{x}_i = F(x_i)
\]
for each node.
$x_i$ is the $n$-dimensional vector of dynamical variables of the $i_{th}$ node.
Here, $F(\cdot): \mathbb{R}^n \to \mathbb{R}^n$ is an arbitrary function of the variables of each node. Thus, the dynamics of the $i_{th}$ node over the complex network (referring to \cite{Pecora1998}) is written as Eq. (\ref{ee1}). Next, we could start with a more general case, nonlinear form in the inner coupling function, i.e. the system (\ref{ee1}).
Then, the collective system is Eq. (\ref{ee2}).

Then, linearize the system (\ref{ee2}) by the Jacobian matrix and
let $\xi_k$ be the variations on the $i_{th}$ node, and then
 the variational equation of Eq. (\ref{ee2}) is
\begin{align}\label{e18}
\dot{\xi}_k = \left[ I_N \otimes D \mathbf{F} + \sigma \mathcal{L} \otimes D \mathbf{H} \right] \xi_k
\end{align}
where $D\mathbf{F}$ and $D\mathbf{H}$ are Jacobian functions of $\mathbf{F}(\cdot)$ and $\mathbf{H}(\cdot)$ respectively.
Diagonalizing the matrix $\mathcal{L}$, the system is recast as Eq. (\ref{e3})
\begin{align}\label{e3}
\dot{\xi}_k = [D\mathbf{F} + \sigma \gamma_k D\mathbf{H}]{\xi}_k,\quad k=1,2,\cdots,N. 
\end{align}

We next divide the matrices $D\mathbf{F}$ and $D\mathbf{H}$ into two parts, respectively, that is, 
$D\mathbf{F}=\Phi_1+\Phi_2(\mathbf{x})$ and $D\mathbf{H}=\Psi_1+\Psi_2(\mathbf{x})$, where $\Phi_1$ and $\Psi_1$ are the constant matrix, $\Phi_2(\mathbf{x})$ and $\Psi_2(\mathbf{x})$ are function matrices with respect to $\mathbf{x}$. Therefore, Eq. (\ref{e3}) is recast as  
\begin{align}\label{e4}
\dot{\xi}_k=[\Phi_1+\sigma \gamma_k \Psi_1] {\xi}_{k}+[\Phi_2(\mathbf{x})+\sigma \gamma_k \Psi_2(\mathbf{x})] {\xi}_{k}.
\end{align}
The stable problem on (\ref{e3}) is converted into one of whether the first and second items are Hurwitz. Firstly, in light of the linear transformation  ${\xi}_k=P\tilde{\xi}_k$, the first item in (\ref{e4}) is written as  
\begin{align}\label{e5}
\dot{\tilde{\xi}}^1_k=&(J+ \sigma \gamma_k \mathcal{H}) \tilde{\xi}^1_{k},\quad k=2,3,\ldots,N.
\end{align}
Therefore, the first item can be guaranteed to be Hurwitz using the method in Section \ref{s1}.

The above method ensures that the first item in (\ref{e4}) is Hurwitz. 
Now, if the second item is stable (that is, $(\Phi_2(\mathbf{x}) + \sigma \gamma_i\Psi_2\mathbf{x}))$ is negative definite) with varying of state $\mathbf{x}$ at every time or vanishes, the system (\ref{e3}) will achieve synchronization.

Next, we focus on the second item. Due to that the stability just depends on the system's real part, and then to ensure that holds for all $i$, we consider it as follows:
\begin{align}\label{e7}
\dot{\tilde{\xi}}_k^2 = \left(\Phi_2(\mathbf{x}) + \sigma Re({\gamma}_k)\Psi_2(\mathbf{x})\right) \tilde{\xi}_k^2.
\end{align}
Setting $\Psi_2(\mathbf{x})=-\frac{1}{\kappa}\Phi_2(\rho_j \mathbf{x})$, with $\kappa \leq \min\limits_{\gamma_i}{Re(\gamma_i)}$. While $\rho_j$ are the configuration parameters of the nonzero elements, which are used to guaranty all $Ger\check{s}gorin$ discs in the matrix $[D\mathbf{F} + \sigma \gamma_k D\mathbf{H}]$ are not in the right half-plane. Inserting $\Psi_2(\mathbf{x})$ into (\ref{e7}) without considering the imaginary part, one has 
\begin{align}\label{e8}
\dot{\tilde{\xi}}_k^2 = \left( \Phi_2(\mathbf{x}) - \frac{\sigma Re({\gamma}_k)} {\kappa} \Phi_2(\rho_j \mathbf{x})\right) \tilde{\xi}_k^2. 
\end{align}
Thus, if only $\Psi_2(\mathbf{x})$ is designed to guaranty that the system (\ref{e8}) is stable or eliminate $\Phi_2(\mathbf{x})$, the complex network achieves synchronization. 

Note that the configuration way for $\Psi_2(\mathbf{x})$ is feasible, since the local states $\mathbf{x}$ are measurable or observable in general.
\hspace{20in} 

\textbf{Example} --- An example of the three-oscillator universal probe \cite{Fink2000} shows how this works.
The equations of motion for the $R\ddot{o}ssler$ system are
\begin{align}\label{e9}
\frac{dx_i}{dt} &= -(y_i+z_i), \nonumber \\
\frac{dy_i}{dt} &= x_i + ay_i, \\
\frac{dz_i}{dt} &= b + z_i(x_i - c),\quad i = 1,2,3 \nonumber
\end{align}
where $a=b=0.2$ and $c=7.0$. It has a chaotic regime of behavior.

Consider the following network configuration of three coupled oscillators
(for now, only additive coupling is considered):
\begin{align}\label{e10}
\frac{d\mathbf{x}_i}{dt} &= F(\mathbf{x}_i) + \frac{\varepsilon}{3}\bigl[{H}(\mathbf{x}_{i+1}) + {H}(\mathbf{x}_{i-1}) - 2{H}(\mathbf{x}_i)\bigr] \nonumber \\
&\quad + \frac{\delta}{\sqrt{3}}\bigl[{H}(\mathbf{x}_{i+1}) - {H}(\mathbf{x}_{i-1})\bigr],\ i = 1,2,3 \quad \mbox{cyclically}, 
\end{align}
where $\mathbf{x}_i=[x_i\quad y_i\quad z_i]^\prime$ is the state of the node. The corresponding network schematic is shown in Fig. \ref{f7}, 
where each oscillator
is connected to the two neighbors through the $H(\cdot)$ function with combinations that are symmetric ($+$, with
weight $\varepsilon/3$) and antisymmetric ($-$, with weight $\delta \sqrt{3}$). If we let  $\varepsilon=3$ and $\delta=\sqrt{3}$, the states of three oscillators can clearly achieve synchronization with the given inner coupling matrix $H=\begin{pmatrix}
0 & 0 & 0 \\
0 & 1  & 0 \\
0 & 0 & 0
\end{pmatrix}$. It has already been verified in \cite{Fink2000}. 

Unfortunately, the three-oscillator universal probe is impossible to synchronize when $\varepsilon$ is $0.1$ or smaller. At this point, there is nothing to do to update the network structure in reality under the constraints ($\varepsilon \le 0.1$). However, the method in this paper can be used to address it by restructuring the inner coupling matrix. which can be designed for the node locally.
 \begin{figure}[http]
  \centering
  \includegraphics[width=3in]{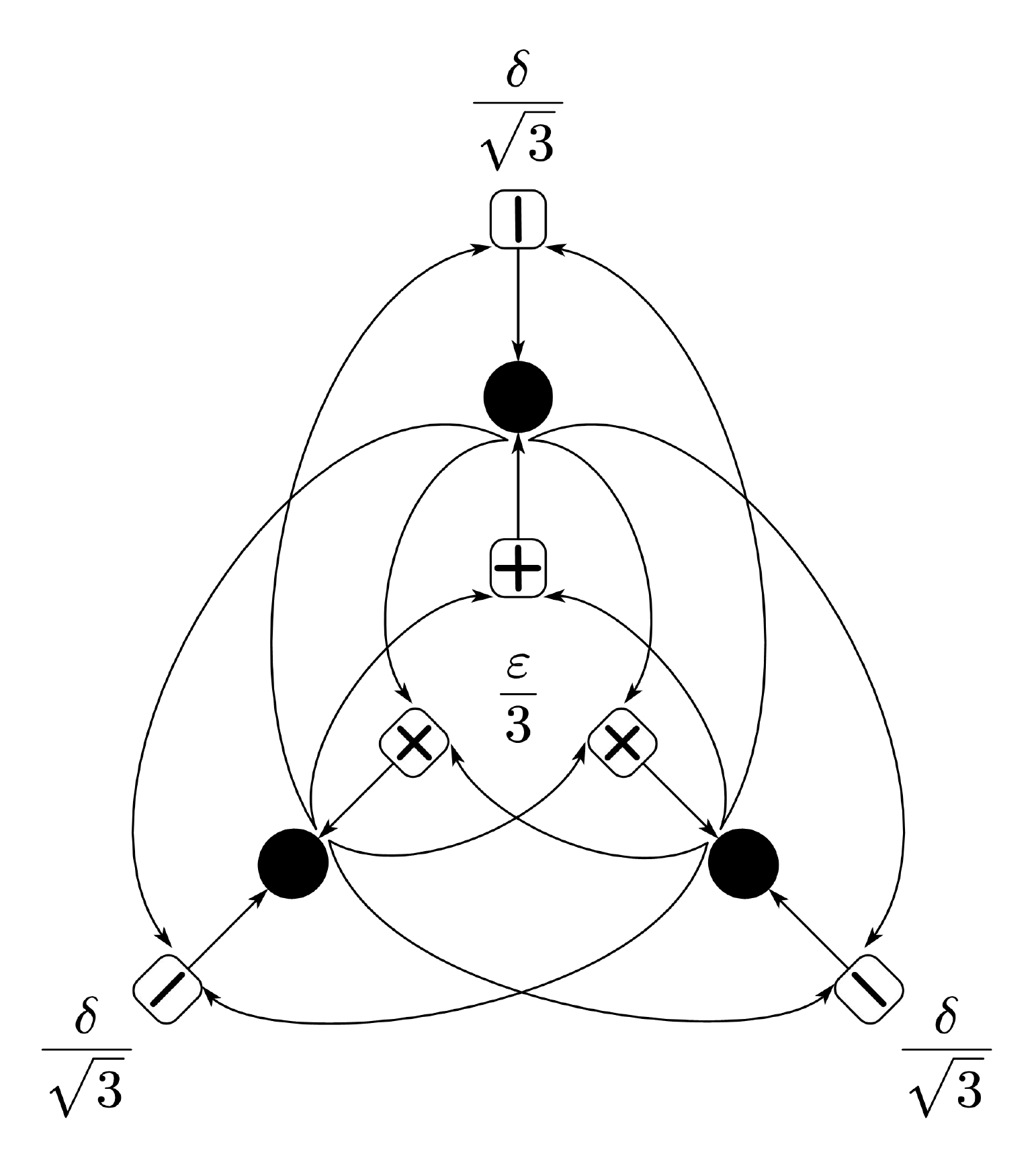}
  \centering \caption{Network schematic of three oscillator system }\label{f7}
\end{figure}

First, the collective variational equation of Eq.(\ref{e3}) is recast as
\begin{align}\label{e11}
\frac{d\xi}{dt} &=(I_N\otimes D\mathbf{F}) \xi +   (G \otimes D\mathbf{H})  \xi, 
\end{align}
where $\xi=[\xi_1,\ldots,\xi_N]$, without loss of generality, let $\sigma=1$. And then
$$G=
\left(
\begin{array}{ccc}
-2\frac{\varepsilon}{3} & \frac{\varepsilon}{3} + \frac{\delta}{\sqrt{3}} & \frac{\varepsilon}{3} - \frac{\delta}{\sqrt{3}} \\
\frac{\varepsilon}{3} - \frac{\delta}{\sqrt{3}} & -2\frac{\varepsilon}{3} & \frac{\varepsilon}{3} + \frac{\delta}{\sqrt{3}} \\
\frac{\varepsilon}{3} + \frac{\delta}{\sqrt{3}} & \frac{\varepsilon}{3} - \frac{\delta}{\sqrt{3}} & -2\frac{\varepsilon}{3}
\end{array}
\right)
$$

Diagonalizing the connection matrix $G$ in the second term of Eq.(\ref{e11}), one has
\begin{align}\label{e12}
\dot{\xi}&=
(I_N\otimes D\mathbf{F}) \xi
 +
\begin{pmatrix}
0 & 0 & 0 \\
0 & \gamma_1 D\mathbf{H} & 0 \\
0 & 0 & \gamma_2 D\mathbf{H}
\end{pmatrix} \xi,
\end{align}
where $\gamma_{1,2}=\varepsilon \pm i\delta$ and 
$
D\mathbf{F}=\begin{pmatrix}
0 & -1 & -1 \\
1 & a & 0 \\
z & 0 & x-c
\end{pmatrix}.
$

Here let
$
D\mathbf{F}=\Phi_1+\Phi_2(\mathbf{x}),
$
$$
\Phi_1=\begin{pmatrix}
0 & -1 & -1 \\
1 & a & 0 \\
0 & 0 & -c
\end{pmatrix},\quad
\Phi_2(\mathbf{x})=\begin{pmatrix}
0 & 0 & 0 \\
0 & 0 & 0 \\
z & 0 & x
\end{pmatrix}.
$$
\begin{figure}[http]
        \includegraphics[width=3.5in]{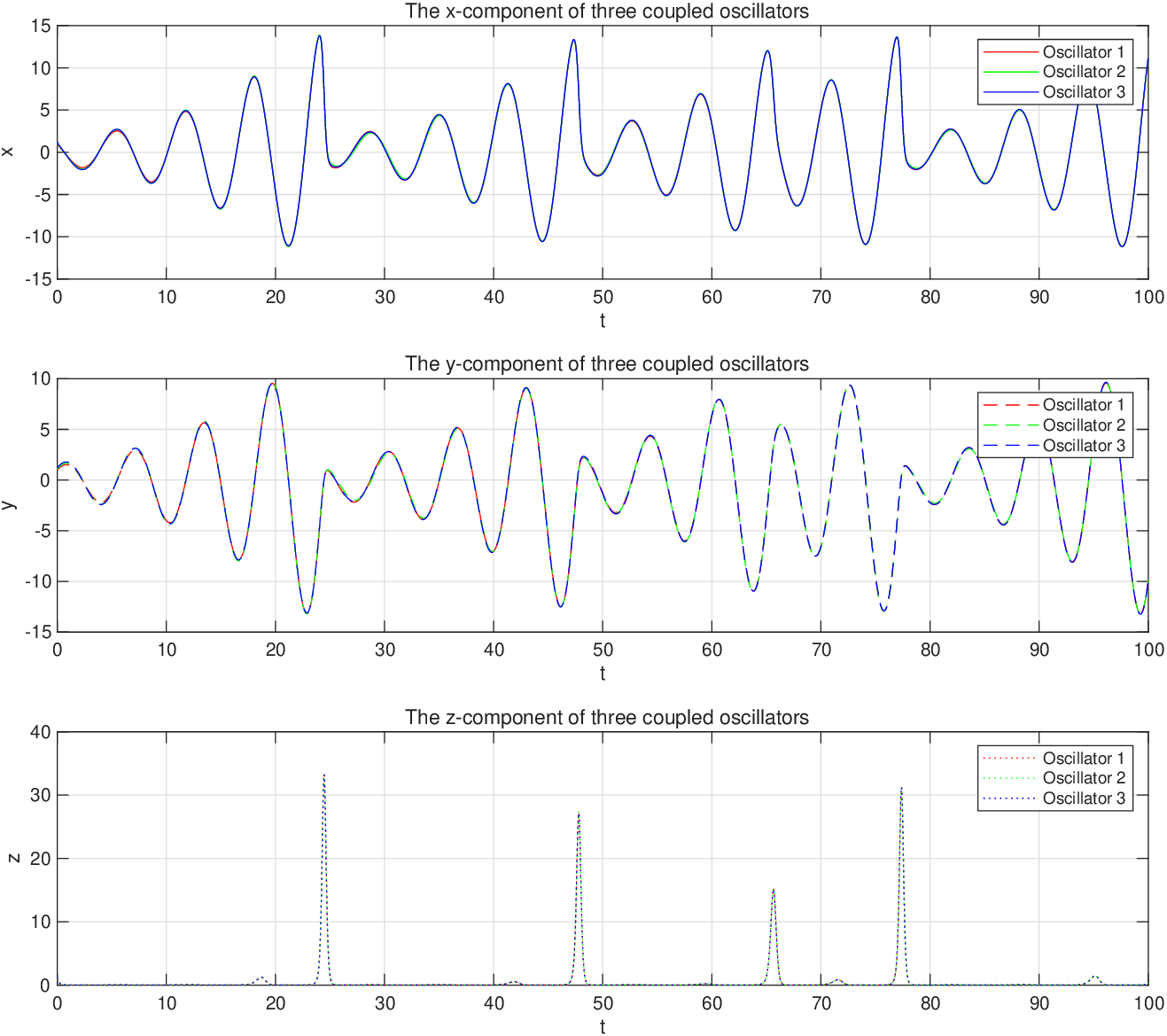}  
        \caption{The component evolution of three coupled oscillators}
        \label{f3}
\end{figure}
Set $\varOmega (\mathbf{x})=-\Phi_2(\mathbf{x})$,
and let $\Psi_2(\mathbf{x})=\frac{1}{\varepsilon}\varOmega(\xi)$, the second item in (\ref{e4}) vanishes.
Moreover, set $\mathcal{H}=\mbox{diag}\{-5\}$, and then $\Psi_1=\mbox{diag}\{-5\}$ by $Re(P\mathcal{H}P^{-1})$. One has
$$D\mathbf{H}= \Psi_1+\Psi_2(\mathbf{x})=\begin{pmatrix}
-5 & 0 & 0 \\
0 & -5 & 0 \\
-\frac{1}{\varepsilon}z & 0 & -5-\frac{1}{\varepsilon}x
\end{pmatrix},$$
where $\varepsilon=0.1$ and $\delta=\sqrt{3}$.
Furthermore, the inner coupling function $H(\mathbf{x})= D \mathbf{H}$.

The simulation results are shown in Figs. \ref{f3} and \ref{f4}.
Fig. \ref{f3} describes the evolutionary process of three components $(x,y,z)$ respectively.
We can see that three oscillators do not achieve synchronization nearly until $40s$, more clearly manifested in the $x$ and $y$ components.
Fig. \ref{f4} shows the result $3-D$ in the $x-y-z$ coordinate system.
Clearly, the three oscillators spontaneously achieve synchronization under the influence of the inner coupling matrix. 
\begin{figure}[http]
        \includegraphics[width=3.5in]{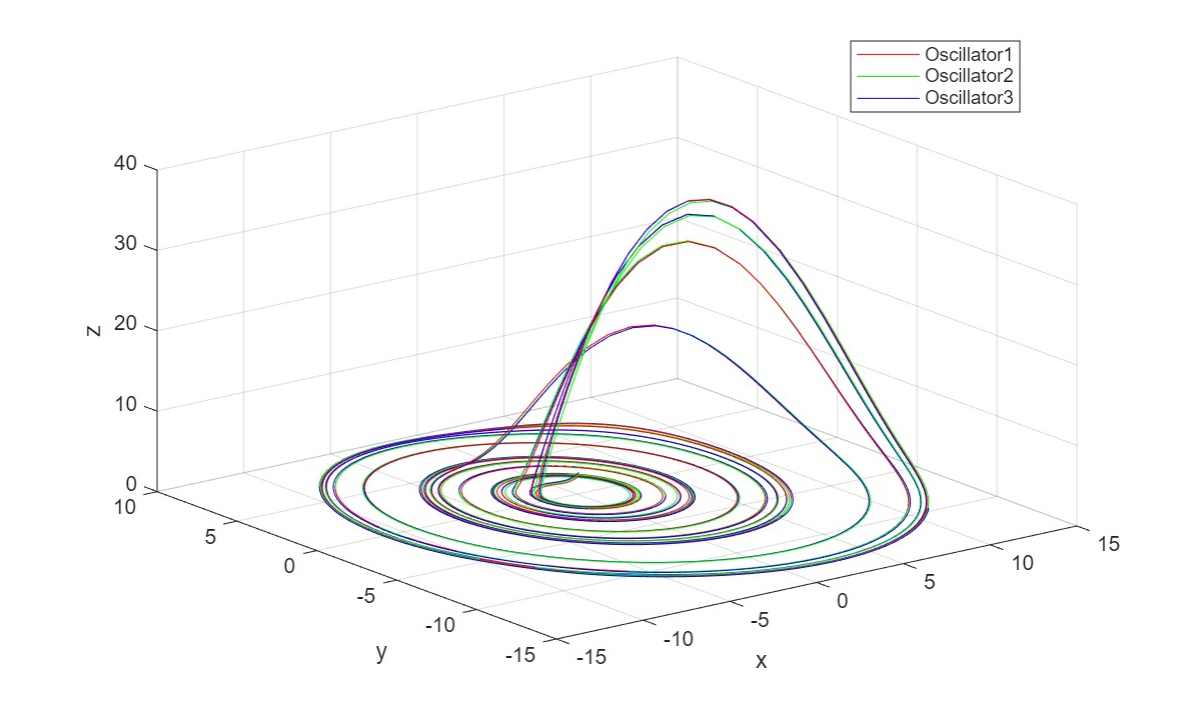}  
        \caption{The $3-D$ state evolution of three coupled oscillators}
    \label{f4}
\end{figure}

{\color{red}\section{Conclusion} This paper gives a fresh look at network
synchronization. Motivation comes from the fact that the inner
coupling matrix is not subject to any restriction, such as distance
and communication strength among nodes. It can be configured
at will to meet the synchronization performance if only the
states of the local dynamic are measurable or observable and
the communication topology is connected. Thus, it is very useful
for future practical engineering design.}

\bibliographystyle{IEEEtran}
\bibliography{mydatabase}
\end{document}